\definecolor{mypurple}{rgb}{.4,.0,.5}
\definecolor{darkgreen}{rgb}{0, 0.4,0}
\newcommand{\dgr}[1]{\textcolor{darkgreen}{#1}}
\definecolor{purplebrown}{rgb}{0.5,0.1,0.6}
\newcommand{\bl}[1]{\textcolor{blue}{#1}}
\newcommand{\red}[1]{\textcolor{red}{#1}}
\newcommand{\prp}[1]{\textcolor{purple}{#1}}
\definecolor{shadebrown}{rgb}{0.1,0.1,0.9}
\definecolor{lightblue}{rgb}{0.2,0,1}
\newtcbox{\xmyboxQ}{on line,
arc=7pt,
before upper={\rule[-3pt]{0pt}{10pt}},boxrule=1.1pt,
boxsep=0pt,left=6pt,right=6pt,top=0pt,bottom=0pt,enhanced, frame style image=blueshade.png,interior style image=goldshade.png}
\newtcbox{\xmybox}{on line,
arc=7pt,
before upper={\rule[-3pt]{0pt}{10pt}},boxrule=0pt,
boxsep=0pt,left=6pt,right=6pt,top=0pt,bottom=0pt,enhanced, coltext=blue, colback=white!10!yellow}
\newtcbox{\xmyboxa}{on line,
arc=7pt,
before upper={\rule[-3pt]{0pt}{10pt}},boxrule=0pt,
boxsep=0pt,left=6pt,right=6pt,top=0pt,bottom=0pt,enhanced, colback=white!10!yellow}
\newtcbox{\xmyboxb}{on line,
arc=7pt,
before upper={\rule[-3pt]{0pt}{10pt}},boxrule=1pt,colframe=darkgreen!100!blue,
boxsep=0pt,left=6pt,right=6pt,top=0pt,bottom=0pt,enhanced, colback=white!10!yellow}
\newtcbox{\xmyboxc}{on line,
arc=7pt,
before upper={\rule[-3pt]{0pt}{10pt}},boxrule=.7pt,colframe=blue!100!blue,
boxsep=0pt,left=6pt,right=6pt,top=0pt,bottom=0pt,enhanced, coltext=blue, colback=white!10!yellow}
\newtcbox{\xmytboxa}{on line,
arc=7pt,
before upper={\rule[-3pt]{0pt}{10pt}},boxrule=.0pt,colframe=pink!50!yellow,
boxsep=0pt,left=6pt,right=6pt,top=0pt,bottom=0pt,enhanced, coltext=white, colback=blue!40!red}
\newtcbox{\xmytboxb}{on line,
arc=7pt,
before upper={\rule[-3pt]{0pt}{10pt}},boxrule=.0pt,colframe=pink!50!yellow,
boxsep=0pt,left=6pt,right=6pt,top=0pt,bottom=0pt,enhanced, coltext=white, colback=white!40!green}
\def\s{{\bf s}}
\def\y{{\bf y}}
\def\v{{\bf v}}
\def\x{{\bf x}}
\def\x{{\mathbf x}}
\def\s{{\bf s}}
\def\v{{\bf v}}
\def\x{{\bf x}}
\def\y{{\bf y}}
\def\z{{\bf z}}
\def\h{{\bf h}}
\def\be{\begin{equation}}
\def\ee{\end{equation}}
\def\ba{\left[\begin{array}}
\def\ea{\end{array}\right]}
\def\s{{\bf s}}
\def\v{{\bf v}}
\def\x{{\bf x}}
\def\y{{\bf y}}
\def\z{{\bf z}}
\def\1{{\bf 1}}
\def\0{{\bf 0}}
\def\erf{\mbox{erf}}
\def\erfc{\mbox{erfc}}
\def\mR{{\mathbb R}}
\def\mE{{\mathbb E}}
\newtheorem{theorem}{Theorem}
\begin{document}

\begin{singlespace}

\title {Starting CLuP with polytope relaxation  %A tight variant of Gordon's escape through a mesh theorem
%\footnote{ This work was supported in
%part.}
}
\author{
\textsc{Mihailo Stojnic
\footnote{e-mail: {\tt flatoyer@gmail.com}} }}
\date{}
\maketitle

\centerline{{\bf Abstract}} \vspace*{0.1in}

The Controlled Loosening-up (CLuP) mechanism that we recently introduced in \cite{Stojnicclupint19} is a generic concept that can be utilized to solve a large class of problems in polynomial time. Since it relies in its core on an iterative procedure, the key to its excellent performance lies in a typically very small number of iterations needed to execute the entire algorithm. In a separate paper \cite{Stojnicclupcmpl19}, we presented a detailed complexity analysis that indeed confirms the relatively small number of iterations. Since both papers, \cite{Stojnicclupint19} and \cite{Stojnicclupcmpl19} are the introductory papers on the topic we made sure to limit the initial discussion just to the core of the algorithm and consequently focused only on the algorithm's most basic version. On numerous occasions though, we emphasized that various improvements and further upgrades are possible. In this paper we present a first step in this direction and discuss a very simple upgrade that can be introduced on top of the basic CLuP mechanism. It relates to the starting of the CLuP and suggests the well-known so-called polytope-relaxation heuristic (see, e.g. \cite{StojnicBBSD05,StojnicBBSD08}) as the starting point. We refer to this variant of CLuP as the CLuP-plt and proceed with the presentation of its complexity analysis. As in \cite{Stojnicclupcmpl19}, a particular \textbf{\emph{complexity analysis per iteration level}} type of complexity analysis is chosen and presented through the algorithm's application on the well-known MIMO ML detection problem. As expected, the analysis confirms that CLuP-plt performs even better than the original CLuP. In some of the most interesting regimes it often achieves within the \textbf{\emph{first three iterations}} an excellent performance. We also complement the theoretical findings with a solid set of numerical experiments. Those also happen to be in an excellent agreement with the analytical predictions.

\vspace*{0.25in} \noindent {\bf Index Terms: Controlled Loosening-up (CLuP); Polytope relaxation; ML - detection; MIMO systems; Algorthms; Random duality theory}.

\end{singlespace}

%%%%%%%%%%%%%%%%%%%%%%%%%%%%%%%%%%%%%%%%%%%%%%%%%%%%%%%%%%%%%%%%%
\section{Introduction}
\label{sec:back}
%%%%%%%%%%%%%%%%%%%%%%%%%%%%%%%%%%%%%%%%%%%%%%%%%%%%%%%%%%%%%%%%%

To handle famous MIMO ML detection problem, we in \cite{Stojnicclupint19} presented the so-called Controlled Loosening-up (CLuP) algorithm. Since the CLuP algorithm will be the main topic of this paper as well, and since we will study its behavior when applied for solving the MIMO ML detection problems, we first briefly recall on the basics of the MIMO ML.

As usual, one start with the most basic linear system:
\begin{eqnarray}\label{eq:linsys1}
\y=A\x_{sol}+\sigma\v,
\end{eqnarray}
where  $\y\in\mR^m$ is the output vector, $A\in\mR^{m\times n}$ is the system matrix , $\x_{sol}\in\mR^n$ is the input vector, $\v\in\mR^m$ is the noise vector at the output, and $\sigma$ is a scaling factor that determines the ratio of the useful signal and the noise (the so-called SNR (signal-to-noise ratio)). It goes without saying that this type of system modeling is among the most useful/popular in various scientific/engineering fields (a particularly popular application of this model in the fields of information theory and signal processing is its utilization in modeling of multi-antenna systems).

Also, we will here continue the trend that we have started in \cite{Stojnicclupint19} and \cite{Stojnicclupcmpl19}, and consider a statistical setup where both $\v$ and $A$ are comprised of i.i.d. standard normal random variables. A similar continuing the trend from \cite{Stojnicclupint19} and \cite{Stojnicclupcmpl19} regarding the so-called \emph{linear} regime will be in place as well. That means that in this paper we will also view $n$ and $m$ as large but with a constant proportionality between them, i.e. we will assume that $m=\alpha n$ where $\alpha\in\mR_+$ is a number that doesn't change as both $n$ and $m$ grow. The following optimization problem is the simplest yet most fundamental version of the MIMO ML-detection problem
\begin{eqnarray}\label{eq:ml1}
\hat{\x}=\min_{\x\in{\cal X}}\|\y-A\x\|_2,
\end{eqnarray}
where ${\cal X}$ is the set of all available input vectors $\x$. Now, many interesting scenarios/variants of the MIMO ML problem appear depending on the structure of ${\cal X}$ (for example, LASSO/SOCP variants of (\ref{eq:ml1}) often seen in statistics, machine learning, and compressed sensing are just a tiny subset of many very popular scenarios of interest; more on these considerations can be found in e.g. \cite{StojnicGenLasso10,CheDon95,Tibsh96,DonMalMon10,BunTsyWeg07,vandeGeer08,MeinYu09}). Here, we follow into the footsteps of \cite{Stojnicclupint19,Stojnicclupcmpl19} and consider the standard information theory/wireless communications binary scenario which assumes ${\cal X}=\{-\frac{1}{\sqrt{n}},\frac{1}{\sqrt{n}}\}^n$. It goes trivially, basically almost without saying, that $\x_{sol}\in {\cal X}$ is naturally assumed as well. We will also without a loss of generality assume even further that $\x_{sol}=\{\frac{1}{\sqrt{n}},\frac{1}{\sqrt{n}},\dots,\frac{1}{\sqrt{n}}\}$.

The above problem (\ref{eq:ml1}) can be solved either exactly or approximately (for more on various relaxing heuristics see, e.g. \cite{GolVanLoan96Book,GroLovSch93Book,vanMaarWar00,GoeWill95}). What makes it particularly interesting is that in the above mentioned binary scenario, (\ref{eq:ml1}) is typically viewed as a very hard combinatorial optimization type of problem. As such it was obviously the topic of interest in various research communities over last at least half a century. Many excellent algorithms and algorithmic heuristics have been introduced over this period of time. As a detailed discussion about such developments is more suited for review papers we here just in passing mention that some of the very best results regarding to the perspective of the problem that is of interest here can be found in e.g. \cite{StojnicBBSD05,StojnicBBSD08,FinPhoSD85,HassVik05,JalOtt05}. In addition, we also emphasize two probably the most important points: 1) the problem in (\ref{eq:ml1}) is hard if one wants/needs to solve it \textbf{exactly} and 2) polynomial heuristics typically offer an approximate solution that does trail the exact one by a solid margin in almost all interesting scenarios. In \cite{Stojnicclupint19} and \cite{Stojnicclupcmpl19} we introduced the above mentioned CLuP mechanism as a way of attacking the MIMO ML on the exact level. Compared to \cite{StojnicBBSD05,StojnicBBSD08}, which also attacked the MIMO ML on the exact level, CLuP did so by running only a few (fairly often not more than $10$) simplest possible quadratic programming type of iterations. In \cite{Stojnicclupcmpl19}, this rather remarkable property was analytically characterized. Here we provide a similar characterization for a slightly different upgraded variant of CLuP. Before we proceed with the characterization of this new CLuP variant we below first recall on the CLuP's basics.

%%%%%%%%%%%%%%%%%%%%%%%%%%%%%%%%%%%%%%%%%%%%%%%%%%%%%%%%%%%%%%%%%
\subsection{CLuP's basics}
\label{sec:clupbasics}
%%%%%%%%%%%%%%%%%%%%%%%%%%%%%%%%%%%%%%%%%%%%%%%%%%%%%%%%%%%%%%%%%

As is by now well-known from \cite{Stojnicclupint19,Stojnicclupcmpl19}, CLuP is effectively a very simple iterative procedure that in its core form assumes choosing a starting  $\x^{(0)}\in{\cal X}=\{-\frac{1}{\sqrt{n}},\frac{1}{\sqrt{n}}\}^n$, radius $r$, and running the following
\begin{eqnarray}
\x^{(i+1)}=\frac{\x^{(i+1,s)}}{\|\x^{(i+1,s)}\|_2} \quad \mbox{with}\quad \x^{(i+1,s)}=\mbox{arg}\min_{\x} & & -(\x^{(i)})^T\x  \nonumber \\
\mbox{subject to} & & \|\y-A\x\|_2\leq r\nonumber \\
&& \x\in \left [-\frac{1}{\sqrt{n}},\frac{1}{\sqrt{n}}\right ]^n. \label{eq:clup1}
\end{eqnarray}
As one can guess, the choice for $\x^{(0)}$ and $r$ has a very strong effect on the way how the algorithm progresses. For the simplest possible choice of $\x^{(0)}$ (each component of $\x^{(0)}$ is generated with equal likelihood as $-\frac{1}{\sqrt{n}}$ or $\frac{1}{\sqrt{n}}$) we in Figure \ref{fig:fignum2} show both, the theoretical and the simulated CLuP's performance.
\begin{figure}[htb]
%\begin{minipage}[b]{.5\linewidth}
\centering
\centerline{\epsfig{figure=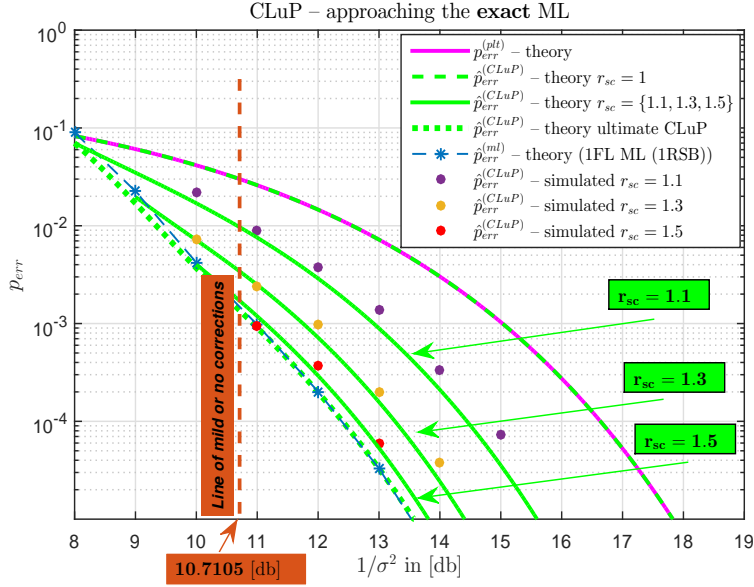,width=11.5cm,height=8cm}}
%\end{minipage}
%\begin{minipage}[b]{.5\linewidth}
%\centering
%\centerline{\epsfig{figure=finprerral08.eps,width=9cm,height=6.5cm}}
%\end{minipage}
\caption{$p_{err}$ as a function of $1/\sigma^2$; $\alpha=0.8$ -- theory and simulations}
\label{fig:fignum2}
\end{figure}
Without going into too much details, we just briefly mention that as $r$ increases from $1.1r_{plt}$ to $1.5r_{plt}$ (more on the definition and importance of $r_{plt}$ can be found in \cite{Stojnicclupint19}) CLuP's performance gets closer to the ML. Further detailed explanations related to the figure can be found in \cite{Stojnicclupint19}. Those among other things include a discussion regarding the appearance of a vertical line (the so-called line of corrections). We of course skip repeating such discussion and just mention that in this paper (similarly to \cite{Stojnicclupcmpl19}) we will be interested in the regimes above the line, i.e. in the regimes where the SNR, $1/\sigma^2$, is to the right of the line.

What is of a bit more interest to the present paper though (and what can't exactly be seen from Figure \ref{fig:fignum2}) is the complexity of the above CLuP algorithm. The analysis of the CLuP's complexity was of course the main topic of \cite{Stojnicclupcmpl19}. The remarkable CLuP's property that it fairly often runs not only in a polynomial but rather fixed number of iterations was through such an analysis fully characterized. What may have escaped the attention in \cite{Stojnicclupcmpl19} is the fact that not only is the number of CLuP's iterations fixed and small, it is actually achieved without much effort in making the algorithm even the tiniest of the bits more complex than its most basic version. That in the first place meant that in \cite{Stojnicclupcmpl19}, we analyzed CLuP's complexity by assuming that the starting $\x^{(0)}$ is basically completely random and as such in a way completely disconnected from the problem at hand. On the other hand, it seems rather natural that a bit more clever choice could help CLuP achieve even better performance. There are a tone of possible choices for $\x^{(0)}$ and the next natural question would be which of such choices would be the best or at least better than the random one. Such a discussion requires a careful analysis and we will present it in a separate companion paper. To insure that the initial discussion in this direction is as simple as possible, we here focus on a particular choice of the starting $\x^{(0)}$ that we view as pretty much the simplest, most natural one after the fully random one considered in \cite{Stojnicclupint19,Stojnicclupcmpl19}.

%%%%%%%%%%%%%%%%%%%%%%%%%%%%%%%%%%%%%%%%%%%%%%%%%%%%%%%%%%%%%%%%%
\subsection{CLuP-plt}
\label{sec:secclupplt}
%%%%%%%%%%%%%%%%%%%%%%%%%%%%%%%%%%%%%%%%%%%%%%%%%%%%%%%%%%%%%%%%%

The new CLuP's variant that we consider in this paper (and to which we refer as CLuP-plt), assumes simply generating $\x^{(0)}$ as the solution to the standard polytope-relaxation heuristic (see, e.g. \cite{StojnicBBSD05,StojnicBBSD08}) of the original ML problem (\ref{eq:ml1})
\begin{eqnarray}
\x^{(0,plt)}=\mbox{arg}\min_{\x} & & \|\y-A\x\|_2  \nonumber \\
\mbox{subject to} & & \x\in \left [-\frac{1}{\sqrt{n}},\frac{1}{\sqrt{n}}\right ]^n. \label{eq:clup0}
\end{eqnarray}
Then one can define CLuP-plt as
\begin{eqnarray}
\x^{(i+1)}=\frac{\x^{(i+1,s)}}{\|\x^{(i+1,s)}\|_2} \quad \mbox{with}\quad \x^{(i+1,s)}=\mbox{arg}\min_{\x} & & -(\x^{(i)})^T\x  \nonumber \\
\mbox{subject to} & & \|\y-A\x\|_2\leq r\nonumber \\
&& \x\in \left [-\frac{1}{\sqrt{n}},\frac{1}{\sqrt{n}}\right ]^n, \label{eq:clup1}
\end{eqnarray}
where $i$ starts from zero and $\x^{(0)}=\x^{(0,plt)}$. Alternatively, one can increment the indices and start counting the iterations by first setting
\begin{equation}\label{eq:clup2}
 \x^{(1,s)}= \x^{(0,plt)}
\ \qquad \mbox{and}\qquad \x^{(1)}= \frac{\x^{(1,s)}}{\|\x^{(1,s)}\|_2}= \frac{\x^{(0,plt)}}{\|\x^{(0,plt)}\|_2},
\end{equation}
and then continuing with (\ref{eq:clup1}) for $i\geq 1$. To be in an alignment with what we have done in \cite{Stojnicclupcmpl19} and to accurately account for the (\ref{eq:clup0}) as the first iteration of the algorithm (as we should) we will rely on (\ref{eq:clup2}) and (\ref{eq:clup1}) with $i\geq 1$.

The main idea behind the CLuP-plt introduced above is that the initial $\x^{(0,plt)}$ is expected to be closer to the targeted optimal solution and as such might help getting to the optimum faster. Below we will provide an analysis that will confirm these expectations. We will formally focus on the algorithm's complexity, which due to its iterative nature amounts to handling the number of iterations. However, we will present a particular type of analysis that we typically refer to as the \textbf{\emph{complexity analysis per iteration level}}, where we basically fully characterize all system parameters and how they change through each of the running iterations. Such an analysis is of course way more demanding than just mere computation of the total number of needed iterations.

Through the presentation below we will see that the analysis of CLuP-plt can be designed so that it to a large degree parallels what we have done when we analyzed the complexity of the original CLuP in \cite{Stojnicclupcmpl19}. We will therefore try to avoid repeating many explanations that are to a large degree similar or even identical to the corresponding ones in \cite{Stojnicclupcmpl19} and instead focus on the key differences. Also, we will emphasize it on multiple occasions but do mention it here as well that we chose a very simple upgrade to showcase potential of the CLuP's core mechanism. Since we will be utilizing the main concepts of the analysis from \cite{Stojnicclupcmpl19} in some of our companion papers as well, we also found this particular upgrade as a very convenient choice to quickly get fully familiar with all the key steps of \cite{Stojnicclupcmpl19}. In a way, we will essentially through a reconsideration in this paper bring those steps (that at first may appear complicated) to a level of a routine. This will turn out to be particularly useful when we switch to discussion of a bit more advanced structures.

Parallelling what was done in \cite{Stojnicclupcmpl19} the presentation will be split into several parts. The characterization of the algorithms's first iteration will be briefly discussed at the beginning and then in the second part we will move to the second and higher iterations. We will also present a large set of simulations results and observe that they are in a rather nice agreement with the theoretical findings.

%%%%%%%%%%%%%%%%%%%%%%%%%%%%%%%%%%%%%%%%%%%%%%%%%%%%%%%%%%%%%%%%%
\section{Complexity analysis of CLuP-plt -- first iteration}
\label{sec:clupfirstit}
%%%%%%%%%%%%%%%%%%%%%%%%%%%%%%%%%%%%%%%%%%%%%%%%%%%%%%%%%%%%%%%%%

As mentioned above, to facilitate the exposition and following we will try to parallel as much as possible the flow of the presentation from \cite{Stojnicclupcmpl19}. That means that the core of the complexity analysis will again be the so-called \textbf{\emph{complexity analysis on per iteration level}}.

We start things off by noting that a combination of (\ref{eq:linsys1}) and (\ref{eq:clup0}) gives the following version of the CLuP-plt's first iterations
\begin{eqnarray}
\x^{(0,plt)}=\mbox{arg}\min_{\x} & & \|\sigma\v+A(\x_{sol}-\x)\|_2  \nonumber \\
\mbox{subject to} & & \x\in \left [-\frac{1}{\sqrt{n}},\frac{1}{\sqrt{n}}\right ]^n, \label{eq:avoidclup1}
\end{eqnarray}
which with a cosmetic change easily becomes
\begin{eqnarray}
\x^{(0,plt)}=\mbox{arg}\min_{\x} & & \|\sigma\v+A\z\|_2  \nonumber \\
\mbox{subject to} & & \z\in \left [0,\frac{2}{\sqrt{n}}\right ]^n. \label{eq:avoidclup1plt1}
\end{eqnarray}
Following considerations from \cite{Stojnicclupint19,Stojnicclupcmpl19} and ultimately those
from \cite{StojnicCSetam09,StojnicCSetamBlock09,StojnicISIT2010binary,StojnicDiscPercp13,StojnicUpper10,StojnicGenLasso10,StojnicGenSocp10,StojnicPrDepSocp10,StojnicRegRndDlt10,Stojnicbinary16fin,Stojnicbinary16asym} and utilizing the concentration strategy we set $\|\z\|_2^2=c_{1,z}$ and instead of (\ref{eq:avoidclup1plt1}) consider
\begin{eqnarray}
\xi_{p,1}(\alpha,\sigma,c_{1,z})=\lim_{n\rightarrow\infty}\frac{1}{\sqrt{n}}\mE\min_{\z} & & \|\sigma\v+A\z\|_2  \nonumber \\
\mbox{subject to} & & \|\z\|_2^2=c_{1,z}\nonumber \\
&& \z\in \left [0,2/\sqrt{n}\right ]^n. \label{eq:avoidclup1a3}
\end{eqnarray}
It is now not that hard to note that the problem in (\ref{eq:avoidclup1a3}) is conceptually identical to the one in equation (7) in \cite{Stojnicclupcmpl19}. In fact, it can be thought of a special case of the one from \cite{Stojnicclupcmpl19} with $s_1$ and the components of $\x^{(0)}$ in equation (7) in \cite{Stojnicclupcmpl19} being equal to zero. This basically means that one can completely repeat the rest of the analysis from the second section of \cite{Stojnicclupcmpl19}. The only substantial difference will be that the $\nu$ variable from \cite{Stojnicclupcmpl19}'s second section will now be zero. In particular, instead of \cite{Stojnicclupcmpl19}'s equation (16) one now has for the optimizing $\z_i$
\begin{equation}
\z_i=\frac{1}{\sqrt{n}}\min \left (\max\left (0,-\left (\frac{\h}{2\gamma}\right )\right ),2\right ).\label{eq:avoidclup14a}
\end{equation}
Moreover, analogously to \cite{Stojnicclupcmpl19}'s equations (18) and (19) one now has
\begin{eqnarray}
I_{1,1}(\gamma) &  = & -(exp(-0.5(4 \gamma )^2) ( - 4 \gamma) + \sqrt{\pi/2}\erf(2\sqrt{2}\gamma + 1/\sqrt{2}))/(4 \sqrt{2 \pi}\gamma) \nonumber \\
I_{2,1}(\gamma)  &  = & 2\gamma\erfc((4\gamma)/\sqrt{2})-2exp(-1/2(4\gamma)^2)/\sqrt{2\pi},
\label{eq:avoidclup16}
\end{eqnarray}
and
\begin{equation}
\xi_{RD}^{(1)}(\alpha,\sigma;c_{1,z},\gamma)=\sqrt{\alpha}\sqrt{c_{1,z}+\sigma^2}+I_{1,1}(\gamma)+I_{2,1}(\gamma)-\gamma c_{1,z}. \label{eq:avoidclup17}
\end{equation}
The following theorem summarizes what we presented above.
\begin{theorem}(CLuP-plt -- RDT estimate -- first iteration)
Let $\xi_{p,1}(\alpha,\sigma,c_{1,z})$ and $\xi_{RD}^{(1)}(\alpha,\sigma;c_{1,z},\gamma)$ be as in (\ref{eq:avoidclup1a3}) and (\ref{eq:avoidclup17}), respectively. Then \begin{equation}
\xi_{p,1}(\alpha,\sigma,c_{1,z})= \max_{\gamma}\xi_{RD}^{(1)}(\alpha,\sigma;c_{1,z},\gamma).\label{eq:thmavoidcluprd1}
\end{equation}
Consequently,
\begin{equation}
\min_{c_{1,z}}\xi_{p,1}(\alpha,\sigma,c_{1,z})= \min_{c_{1,z}}\max_{\gamma}\xi_{RD}^{(1)}(\alpha,\sigma;c_{1,z},\gamma).
% \mbox{and}\quad \max_{c_2}\min_{c_1}\xi_{p}(\alpha,\sigma;c_2,c_1)\geq \max_{c_2}\min_{c_1}\xi_{RD}(\alpha,\sigma;c_2,c_1,\gamma,\nu).
\label{eq:thmcluprd2}
\end{equation}\label{thm:avoidcluprd1}
\end{theorem}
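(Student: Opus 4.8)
The plan is to establish (\ref{eq:thmavoidcluprd1}) by the Random Duality Theory (Gordon-type Gaussian comparison) machinery, exactly as in the second section of \cite{Stojnicclupcmpl19}, and then read off (\ref{eq:thmcluprd2}) as an immediate corollary. Since (\ref{eq:avoidclup1a3}) is, as noted in the text, the special case of equation~(7) of \cite{Stojnicclupcmpl19} obtained by setting $\nu=0$ (equivalently, taking $s_1$ and the components of $\x^{(0)}$ to be zero), the cleanest route is to specialize that analysis rather than reconstruct it. First I would dualize the objective, writing $\|\sigma\v+A\z\|_2=\max_{\|\u\|_2\leq 1}\u^T(\sigma\v+A\z)$, which exposes the bilinear Gaussian form $\u^T A\z$ that the comparison requires.

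Next I would apply Gordon's comparison to replace $\u^T A\z$ by $\|\z\|_2\g^T\u+\|\u\|_2\h^T\z$, with independent standard normal $\g\in\mR^m$ and $\h\in\mR^n$. The key simplification, and the reason the spherical constraint $\|\z\|_2^2=c_{1,z}$ can be carried along, is that on the feasible set $\|\z\|_2=\sqrt{c_{1,z}}$ is constant: this freezes the $\g$-coefficient and lets me merge $\sqrt{c_{1,z}}\g+\sigma\v$ into a single Gaussian vector with i.i.d.\ $\N(0,c_{1,z}+\sigma^2)$ entries. Performing the inner maximization over $\u$ then produces a term whose norm concentrates to $\sqrt{\alpha}\sqrt{c_{1,z}+\sigma^2}$, matching the first summand of $\xi_{RD}^{(1)}$ in (\ref{eq:avoidclup17}). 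Introducing the Lagrange multiplier $\gamma$ for $\|\z\|_2^2=c_{1,z}$ (which accounts for the $-\gamma c_{1,z}$ term) decouples the remaining minimization over $\z\in[0,2/\sqrt{n}]^n$ into $n$ identical scalar box-constrained problems, each solved by the clipped-linear rule (\ref{eq:avoidclup14a}). Evaluating the resulting per-coordinate expectation against the standard normal density yields the integrals $I_{1,1},I_{2,1}$ of (\ref{eq:avoidclup16}) and, after assembling, $\xi_{RD}^{(1)}(\alpha,\sigma;c_{1,z},\gamma)$, with the outer $\max_\gamma$ arising from Lagrangian duality.

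The consequence (\ref{eq:thmcluprd2}) is then immediate: since (\ref{eq:thmavoidcluprd1}) holds for each fixed $c_{1,z}$, minimizing both sides over $c_{1,z}$ gives the stated identity. The main obstacle I anticipate is not the bookkeeping of the Gaussian integrals but the rigorous justification that the Gordon comparison is \emph{tight} — two-sided, hence an equality rather than a one-sided bound — despite the non-convex spherical constraint, together with the attendant uniform concentration and the legitimacy of interchanging the $n\to\infty$ limit, the expectation, and the $\min_{c_{1,z}}\max_\gamma$ operations. The device that makes this rigorous, and which is already established in \cite{Stojnicclupcmpl19}, is precisely that freezing $\|\z\|_2$ via the concentration strategy is what lets the strong (two-sided) form of the random duality comparison go through via the Lagrangian in $\gamma$; I would invoke that result with $\nu=0$ rather than re-prove the comparison from scratch.
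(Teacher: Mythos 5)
Your proposal is correct and follows essentially the same route as the paper: the paper's proof is a one-line appeal to the RDT machinery of \cite{Stojnicclupcmpl19} (specialized to $\nu=0$, with strong random duality supplying tightness), and your sketch simply unpacks exactly that argument — Gordon comparison after dualizing the norm, freezing $\|\z\|_2$ via the spherical constraint, the Lagrange multiplier $\gamma$, and per-coordinate decoupling. No substantive difference to report.
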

\begin{proof}
Follows automatically from \cite{Stojnicclupcmpl19} and ultimately the RDT mechanisms from \cite{StojnicCSetam09,StojnicISIT2010binary,StojnicDiscPercp13,StojnicGenLasso10,StojnicGenSocp10,StojnicPrDepSocp10,StojnicRegRndDlt10} (as in \cite{Stojnicclupcmpl19}, the \bl{\textbf{strong random duality}} is trivially in place here as well).
\end{proof}
We do mention in passing also that one can trivially first solve the optimization over $c_{1,z}$ and effectively transform/simplify the above optimization problem to an optimization over only $\gamma$. However, to maintain parallelism with \cite{Stojnicclupcmpl19} and ultimately with what we will present below, we avoided doing so.

%%%%%%%%%%%%%%%%%%%%%%%%%%%%%%%%%%%%%%%%%%%%%%%%%%%%%%%%%%%%%%%%%
\subsection{CLuP-plt -- first iteration summary}
\label{sec:clupfirstitsummary}
%%%%%%%%%%%%%%%%%%%%%%%%%%%%%%%%%%%%%%%%%%%%%%%%%%%%%%%%%%%%%%%%%

Since the above theorem is very similar to the corresponding one in \cite{Stojnicclupcmpl19}, we below continue to follow into the footsteps of \cite{Stojnicclupcmpl19} and in a summarized way formalize how  it can be utilized to finally obtain all of the key algorithm's parameters in the first iteration.

\vspace{.1in}
\noindent \xmyboxc{\bl{\emph{\textbf{Summary of the CLuP-plt's first iteration }}}}

We first solve
\begin{eqnarray}
\{\hat{\gamma}^{(1)},\hat{c}_{1,z}^{(1)}\}=\mbox{arg} \min_{0\leq c_{1,z}\leq 4}\max_{\gamma}\xi_{RD}^{(1)}(\alpha,\sigma;c_{1,z},\gamma)\label{eq:avoidclup17a}
\end{eqnarray}
and then as in \cite{Stojnicclupcmpl19}'s equations (23) define
\begin{eqnarray}
s_{x,1}(\gamma) & = & 1/2/\gamma/\sqrt{2\pi}(1-exp(-(4\gamma)^2/2))\nonumber \\
s_{xsq,1}(\gamma) & = & -I_{1,1}(\gamma)/\gamma\nonumber \\
s_{x,2}(\gamma) & = & 2(.5\erfc((4\gamma)/\sqrt{2}))\nonumber \\
s_{xsq,2}(\gamma) & = & 2s_{x,2}.\label{eq:avoidclup18}
\end{eqnarray}
Moreover, analogously to \cite{Stojnicclupcmpl19}'s (24),(25), and (26) we now have
\begin{eqnarray}
\sqrt{n}\mE\z_i & = &  s_{x,1}(\hat{\gamma}^{(1)})+ s_{x,2}(\hat{\gamma}^{(1)})\nonumber \\
n\mE\z_i^2 & = & s_{xsq,1}(\hat{\gamma}^{(1)})+ s_{xsq,2}(\hat{\gamma}^{(1)}),\label{eq:avoidclup19}
\end{eqnarray}
and with $\x_i=\x_{sol}-\z_i$ also
\begin{eqnarray}
\sqrt{n}\mE\x_i & = & 1-(s_{x,1}(\hat{\gamma}^{(1)})+ s_{x,2}(\hat{\gamma}^{(1)}))\nonumber \\
n\mE\x_i^2 & = & s_{xsq,1}(\hat{\gamma}^{(1)})+ s_{xsq,2}(\hat{\gamma}^{(1)})+\sqrt{n}2\mE\x_i-1,\label{eq:avoidclup20}
\end{eqnarray}
and finally
\begin{eqnarray}
\mE((\x_{sol})^T\x) & = & 1-(s_{x,1}(\hat{\gamma}^{(1)})+ s_{x,2}(\hat{\gamma}^{(1)}))\nonumber \\
\mE\|\x\|_2^2  & = & s_{xsq,1}(\hat{\gamma}^{(1)})+ s_{xsq,2}(\hat{\gamma}^{(1)})+2\mE((\x_{sol})^T\x)-1.\label{eq:avoidclup21}
\end{eqnarray}
As in \cite{Stojnicclupcmpl19}, the strong random duality ensures that the above are not only the expected values but also the concentrating points of the corresponding quantities (concentration of course is exponential in $n$). As in \cite{Stojnicclupcmpl19}'s (27) one can also obtain for the probability of error
\begin{equation}\label{eq:avoidclup21a}
  p_{err}^{(1)}=1-P\left (\z_i\leq \frac{1}{\sqrt{n}}\right )=1-\frac{1}{2}\erfc\left ( \frac{-2\hat{\gamma}^{(1)}}{\sqrt{2}}\right ).
\end{equation}

The theoretical values for all key system parameters that can be obtained utilizing the above Theorem \ref{thm:avoidcluprd1} are shown
in Table \ref{tab:tabavoidnum1} for SNR, $1/\sigma^2=13$[db].
\begin{table}[h]
\caption{\textbf{Theoretical} values for key system parameters obtained based on Theorem \ref{thm:avoidcluprd1}}\vspace{.1in}
\hspace{-0in}\centering
\small{
\begin{tabular}{||c||c|c||c|c||c|c|c|c||}\hline\hline
$1/\sigma^2 $[db]  & $\hat{\nu}^{(1)}$ & $\hat{\gamma}^{(1)}$ & $\hat{c}_{1,z}^{(1)}$ & $\hat{s}_1^{(1)}$ &   $\xi_{RD}^{(1)} $ & $p_{err}^{(1)} $  & $\|\x^{(1,s)}\|_2^2$ &
$(\x_{sol})^T\x^{(1,s)}$ \\ \hline\hline
$13  $ & $\mathbf{0}  $ & $\mathbf{1.2233}  $ & $\mathbf{0.0835}  $ & $\mathbf{-0}  $ & $\mathbf{0.1226}  $ & $\mathbf{0.0072}  $ & $\mathbf{0.7574}  $ & $\mathbf{0.8369}  $ \\ \hline\hline
\end{tabular}}
\label{tab:tabavoidnum1}
\end{table}
To maintain the parallelism with \cite{Stojnicclupcmpl19} and with what we will present below, we artificially keep two additional parameters  $\hat{\nu}^{(1)}$ and $\hat{s}_1^{(1)}$ and assign the value $0$ to them.

%%%%%%%%%%%%%%%%%%%%%%%%%%%%%%%%%%%%%%%%%%%%%%%%%%%%%%%%%%%%%%%%%
\section{Summary of the CLuP-plt's second iteration analysis}
\label{sec:clupsecondit}
%%%%%%%%%%%%%%%%%%%%%%%%%%%%%%%%%%%%%%%%%%%%%%%%%%%%%%%%%%%%%%%%%

The move from the first to the second iteration is of course of critical importance for understanding all later moves from $k$-th to $(k+1)$-th iteration for $k\geq 2$.
The CLuP's second iteration assumes computing $\x^{(2)}$ as
\begin{eqnarray}
\x^{(2)}=\frac{\x^{(2,s)}}{\|\x^{(2,s)}\|_2} \quad \mbox{with}\quad \x^{(2,s)}=\mbox{arg}\min_{\x} & & -(\x^{(1)})^T\x  \nonumber \\
\mbox{subject to} & & \|\y-A\x\|\leq r\nonumber \\
&& \x\in \left [-\frac{1}{\sqrt{n}},\frac{1}{\sqrt{n}}\right ]^n, \label{eq:avoidsecclup1}
\end{eqnarray}
where we recall from (\ref{eq:clup2}), $\x^{(1)}= \frac{\x^{(1,s)}}{\|\x^{(1,s)}\|_2}= \frac{\x^{(0,plt)}}{\|\x^{(0,plt)}\|_2}$. One can then also rewrite (\ref{eq:avoidsecclup1}) in the following way
\begin{eqnarray}
\min_{\z} & & (\x^{(1)})^T\z  \nonumber \\
\mbox{subject to} & & \|\sigma\v+A\z\|\leq r\nonumber \\
&& \z\in \left [0,2/\sqrt{n}\right ]^n. \label{eq:avoidsecclup1a2}
\end{eqnarray}
Utilizing once again the concentration strategy we set $\|\z\|_2^2=c_{2,z}$ and $(\x^{(1,s)})^T\z=s_2$ and consider
\begin{eqnarray}
\xi_{p,2}(\alpha,\sigma,c_{2,z},s_2)=\lim_{n\rightarrow\infty}\frac{1}{\sqrt{n}}\mE\min_{\z} & & \|\sigma\v+A\z\|_2  \nonumber \\
\mbox{subject to} & & \|\z\|_2^2=c_{2,z}\nonumber \\
& & (\x^{(1,s)})^T\z=s_2 \nonumber \\
&& \z\in \left [0,2/\sqrt{n}\right ]^n. \label{eq:avoidsecclup1a3}
\end{eqnarray}
The above problem is structurally literally identical to \cite{Stojnicclupcmpl19}'s (31). One can then repeat all the steps between \cite{Stojnicclupcmpl19}'s (31) and (56) to arrive at the following set of equations that determine the optimizing $\z_i$ and $\x_i^{(2,s)}$
\begin{eqnarray}
\z_i^{(2)} & = & \frac{1}{\sqrt{n}}\min \left (\max\left (0,-\left (\frac{\h_i^{(1,p)}+\nu\x^{(1,s)}_i+\nu_2}{2\gamma}\right )\right ),2\right )\nonumber \\
\x_i^{(2,s)} & = & \frac{1}{\sqrt{n}}-\z_i^{(2)}=\frac{1}{\sqrt{n}}\left (1-\min \left (\max\left (0,-\left (\frac{\h_i^{(1,p)}+\nu\x^{(1,s)}_i+\nu_2}{2\gamma}\right )\right ),2\right )\right ),\label{eq:avoidsecclup14a}
\end{eqnarray}
where one also recalls from (\ref{eq:clup2})
\begin{equation}
\x_i^{(1,s)}=\x^{(0,plt)}=1-\z_i^{(1)}=\frac{1}{\sqrt{n}}\left (1-\min \left (\max\left (0,-\left (\frac{\h_i}{2\hat{\gamma}^{(1)}}\right )\right ),2\right )\right ).\label{eq:avoidsecclup15a}
\end{equation}
As in \cite{Stojnicclupcmpl19}, $\h^{(1,p)}=p^{(1)}\h+\sqrt{1-(p^{(1)})^2}\h^{(1)}$ and the components of both $\h$ and $\h^{(1)}$ are i.i.d. standard normals. Setting
\begin{equation}
I_{1}^{(2)}(\gamma,\nu,\nu_2)  =  \int\int((\h_i^{(1,p)}+\nu\x^{(1,s)}+\nu_2)\z_i^{(2)}+\gamma\left (\z_i^{(2)}\right )^2)exp\left (-\frac{\left (\h_i^{(1)}\right )^2+ \h_i^2}{2}\right )\frac{d\h_i^{(1)}d\h_i}{2\pi},
\label{eq:avoidsecclup16}
\end{equation}
where if negative, the term under the integral is zero for $\gamma<0$. Analogously to \cite{Stojnicclupcmpl19}'s (60) one can define
\begin{eqnarray}
\xi_{RD}^{(2)}(\alpha,\sigma;p^{(1)},q^{(1)},c_{2,z},s_2,s_3,\gamma,\nu,\nu_2) & = & \sqrt{\alpha}\sqrt{c_{2,z}+\sigma^2}\left (q^{(1)}p^{(1)}+\sqrt{1-(q^{(1)})^2}\sqrt{1-(p^{(1)})^2}\right )\nonumber \\
& & +I_{1}^{(2)}(\gamma,\nu,\nu_2) -\nu s_2-\nu_2 s_3-\gamma c_{2,z}. \label{eq:avoidsecclup17}
\end{eqnarray}
Finally, from \cite{Stojnicclupcmpl19}'s (76)-(78) one has
\begin{eqnarray}
\phi_b^{(2)}=\mbox{arg} \min_{s,d_1^{(2)},d_2^{(2)}} & & s\nonumber \\
\mbox{subject to} & & \max_{p^{(1)}}\min_{0\leq c_{2,z}\leq 4}\max_{\gamma,\nu,\nu_2}\xi_{RD}^{(2)}(\alpha,\sigma;p^{(1)},q^{(1)},c_{2,z},s_2,s_3,\gamma,\nu,\nu_2)=r\nonumber \\
& & s_2=d_1^{(1)}+s\sqrt{d_2^{(1)}}\nonumber \\
& & s_3=1-d_1^{(2)}\nonumber \\
& & c_{2,z}=d_2^{(2)}-2d_1^{(2)}+1\nonumber \\
& & q^{(1)} =\frac{s_3-s_2+\sigma^2}{\sqrt{c_{2,z}+\sigma^2}\sqrt{\hat{c}_{1,z}+\sigma^2}},\label{eq:sumsecavoidclup17a5}
\end{eqnarray}
and
\begin{equation}\label{eq:sumsecavoidclup17a7}
 p_{err}^{(2)}=1-\int\int((\mbox{sign}(\x^{(2,s)})+1)/2)exp\left (-\frac{\left (\h_i^{(1)}\right )^2+ \h_i^2}{2}\right )\frac{d\h_i^{(1)}d\h_i}{2\pi}.
\end{equation}
To obtain the remaining key parameters one can utilize
 \begin{eqnarray}\label{eq:sumsecavoidclup17a8}
 \hat{d}_{2}^{(2)} & = & \int\int((\x_i^{(2,s)})^2exp\left (-\frac{\left (\h_i^{(1)}\right )^2+ \h_i^2}{2}\right )\frac{d\h_i^{(1)}d\h_i}{2\pi}\nonumber \\
  \hat{d}_{1}^{(2)} & = & \int\int((\x_i^{(2,s)})exp\left (-\frac{\left (\h_i^{(1)}\right )^2+ \h_i^2}{2}\right )\frac{d\h_i^{(1)}d\h_i}{2\pi}\nonumber \\
    \hat{s}_{2}^{(2)} & = & \int\int((\x_i^{(1,s)})\z_i^{(2)}exp\left (-\frac{\left (\h_i^{(1)}\right )^2+ \h_i^2}{2}\right )\frac{d\h_i^{(1)}d\h_i}{2\pi}.
\end{eqnarray}
To make things easier to follow one can define a set of the key output parameters at the end of the second iteration (of course, it goes without emphasizing that $\x^{(2,s)}$ is the main output of the second iteration). This set consists of \bl{critical} plus \prp{auxiliary} parameters
\begin{equation}\label{eq:sumavoidclup21aa}
  \phi^{(2)}=\{\bl{p_{err}^{(2)},\hat{s}^{(2)},\hat{d}_2^{(2)},\hat{d}_1^{(2)}},\prp{\hat{\nu}^{(2)},\hat{\nu}_2^{(2)},\hat{\gamma}^{(2)},\hat{p}^{(1)},\hat{q}^{(1)},\hat{c}_{2,z}^{(1)},\hat{s}_{2}^{(2)},\hat{s}_{3}^{(2)}}\},
\end{equation}
where
\begin{eqnarray}\label{eq:sumavoidclup21aa1}
\bl{p_{err}^{(2)}} & - &  \mbox{probability of error after the second iteration}\nonumber \\
\bl{\hat{s}^{(2)}} & = &  \mE((\x^{(1)})^T\x^{(2,s)}) - \mbox{objective value after the second iteration} \nonumber \\
\bl{\hat{d}_2^{(2)}} & = & \mE\|\x^{(2,s)}\|_2^2 - \mbox{squared norm after the second iteration} \nonumber \\
\bl{\hat{d}_1^{(2)}} & = & \mE\x_{sol}^T\x^{(2,s)} - \mbox{inner product with $\x_{sol}$ after the second iteration}.
\end{eqnarray}
with the last three quantities being not only the expected but also the concentrating values as well.
Before proceeding with the numerical results for the second iteration we recall the output of the first iteration
\begin{eqnarray}\label{eq:numressec1}
  \phi^{(1)}  =  \{\bl{p_{err}^{(1)},\hat{s}^{(1)},\hat{d}_2^{(1)},\hat{d}_1^{(1)}},\prp{\hat{\nu}^{(1)},\hat{\gamma}^{(1)},\hat{c}_{1,z}^{(1)}}\}=  \{0.0072,-0,0.7574,0.8369,0,1.2233,0.835\}.
\end{eqnarray}
The theoretical values for the output parameters after the second iteration (i.e. for the parameters from (\ref{eq:sumavoidclup21aa}) that are obtained through the discussion presented above for SNR, $1/\sigma^2=13$[db], $\alpha=0.8$, and $r_{sc}=1.3$) are included in Table \ref{tab:tabsecavoidnum1}.
\begin{table}[h]
\caption{\textbf{Theoretical} values for various parameters at the output of the second iteration}\vspace{.1in}
\hspace{-0in}\centering
\small{
\begin{tabular}{||c||c|c|c||c||c|c|c|c|c||}\hline\hline
$1/\sigma^2 $[db]  & $\hat{\nu}^{(2)}$ & $\hat{\nu}_2^{(2)}$ & $\hat{\gamma}^{(2)}$ & $\hat{p}^{(1)}$ & $-\hat{s}^{(2)}$ &   $\xi_{RD}^{(2)} $ & $p_{err}^{(2)} $  & $\|\x^{(2,s)}\|_2^2$ &
$(\x_{sol})^T\x^{(2,s)}$ \\ \hline\hline
$13  $ & $\mathbf{2.7496}  $ & $\mathbf{-0.7821}  $ & $\mathbf{2.1388}  $ & $\mathbf{0.769}  $ & $\mathbf{0.9494}  $ & $\mathbf{0.1594}  $ & $\mathbf{0.00092}  $ & $\mathbf{0.9370}  $ & $\mathbf{0.9600}  $ \\ \hline\hline
\end{tabular}}
\label{tab:tabsecavoidnum1}
\end{table}
One can also characterize the remaining auxiliary parameters from $\phi^{(2)}$, i.e. $\prp{\{\hat{q}^{(1)},\hat{c}_{2,z}^{(2)},\hat{s}_{2}^{(2)},\hat{s}_{3}^{(2)}}\}$ relying on the equality constraints in (\ref{eq:sumsecavoidclup17a5}). Table \ref{tab:tabsecavoidnum2} shows the results for these parameters that can be obtained through both, the equality constraints in (\ref{eq:sumsecavoidclup17a5}) and (\ref{eq:sumsecavoidclup17a8}).
\begin{table}[h]
\caption{\textbf{Theoretical} values for $\prp{\{\hat{q}^{(1)},\hat{c}_{2,z}^{(2)},\hat{s}_{2}^{(2)},\hat{s}_{3}^{(2)}}\}$ obtained utilizing (\ref{eq:sumsecavoidclup17a5}) (\textbf{bold}) as well as (\ref{eq:sumsecavoidclup17a8}) (\prp{\textbf{purple}}) }\vspace{.1in}
\hspace{-0in}\centering
\small{
\begin{tabular}{||c||c|c|c|c||}\hline\hline
$1/\sigma^2 $[db]  & $\hat{s}_2^{(2)}=\hat{d}_1^{(1)}+\hat{s}^{(2)}\sqrt{\hat{d}_2^{(1)}}$ & $\hat{s}_3^{(2)}=1-\hat{d}_1^{(2)}$ & $\hat{c}_{2,z}^{(2)}=\hat{d}_2^{(2)}-2\hat{d}_1^{(2)}+1$ & $\hat{q}^{(1)}=\frac{\hat{s}_3-\hat{s}_2+\sigma^2}{\sqrt{\hat{c}_{2,z}+\sigma^2}\sqrt{\hat{c}_{1,z}+\sigma^2}}$ \\ \hline\hline
$13  $ & $\mathbf{0.01065}  $/\prp{$\mathbf{0.01065}  $} & $\mathbf{0.0400}  $/\prp{$\mathbf{0.0400}  $}&
$\mathbf{0.0170}$/\prp{$\mathbf{0.0170}$} & $\mathbf{0.83885}$/\prp{$\mathbf{0.83885}  $}    \\ \hline\hline
\end{tabular}}
\label{tab:tabsecavoidnum2}
\end{table}

%%%%%%%%%%%%%%%%%%%%%%%%%%%%%%%%%%%%%%%%%%%%%%%%%%%%%%%%%%%%%%%%%
\section{Summary of the CLuP-plt's $(k+1)$-th iteration analysis}
\label{sec:clupkit}
%%%%%%%%%%%%%%%%%%%%%%%%%%%%%%%%%%%%%%%%%%%%%%%%%%%%%%%%%%%%%%%%%

The heart of the analysis mechanism is the move from the first to the second iteration. Such a move is conceptually then identical to the move from any $k$-th to $(k+1)$-th iteration. However, there are still a few technical differences that require a special attention. These differences are of course the main reason why we separately discuss a generic move from $k$-th to $(k+1)$-th iteration for any $k>1$. On the other hand, we have already faced a similar situation in \cite{Stojnicclupcmpl19} and all the results obtained there in this regard can be reutilized. We
start by recalling that  CLuP's $(k+1)$-th iteration is basically the following optimization problem
\begin{eqnarray}
\x^{(k+1)}=\frac{\x^{(k+1,s)}}{\|\x^{(k+1,s)}\|_2} \quad \mbox{with}\quad \x^{(k+1,s)}=\mbox{arg}\min_{\x} & & -(\x^{(k)})^T\x  \nonumber \\
\mbox{subject to} & & \|\y-A\x\|\leq r\nonumber \\
&& \x\in \left [-\frac{1}{\sqrt{n}},\frac{1}{\sqrt{n}}\right ]^n. \label{eq:kit0}
\end{eqnarray}
This is of course structurally identical to (85) in \cite{Stojnicclupcmpl19}. One can then again utilize the Random Duality Theory and repeat all the steps between (85) and (108) in \cite{Stojnicclupcmpl19} to arrive at the following
for the optimizing $\z_i$ and $\x_i^{(k+1,s)}$
\begin{eqnarray}
\z_i^{(k+1)} & = & \frac{1}{\sqrt{n}}\min \left (\max\left (0,-\left (\frac{\h_i^{(k,p)}+\sum_{j=1}^{k}\tilde{\nu}_j\x^{(j,s)}_i+\nu_2}{2\gamma}\right )\right ),2\right )\nonumber \\
\x_i^{(k+1,s)} & = & \frac{1}{\sqrt{n}}-\z_i^{(2)}=\frac{1}{\sqrt{n}}\left (1-\min \left (\max\left (0,-\left (\frac{\h_i^{(k,p)}+\sum_{j=1}^{k}\tilde{\nu}_j\x^{(j,s)}_i+\nu_2}{2\gamma}\right )\right ),2\right )\right ),\label{eq:kit24}
\end{eqnarray}
where $\x^{(j,s)}_i,1\leq j\leq k$ are obtained after the $k$-th iteration as the optimizing variables after each of the first $k$ iterations. One can also set as in \cite{Stojnicclupcmpl19}'s (110)
\begin{equation}
I_{1}^{(k+1)}(\gamma,\nu,\nu_2,\hat{\nu}^{(1)})  =  \mE((\h_i^{(k,p)}+\sum_{j=1}^{k}\tilde{\nu}_j\x^{(j,s)}_i+\nu_2)\z_i^{(k+1)}+\gamma\left (\z_i^{(k+1)}\right )^2),
\label{eq:kit26}
\end{equation}
where for $\gamma<0$ the term under the expectation is assumed zero if negative. Moreover, one can also set as in \cite{Stojnicclupcmpl19}'s (111)
\begin{eqnarray}
\xi_{RD}^{(k+1)}(\alpha,\sigma;P^{(k+1)},Q^{(k+1)},c_{2,z},s_{2,j},s_3,\gamma,\tilde{\nu}_j,\nu_2) & = & \sqrt{\alpha}\sqrt{c_{2,z}+\sigma^2} f_{sph}^{(k+1)} +I_{1}^{(k+1)}(\gamma,\nu,\nu_2,\hat{\nu}^{(1)}) \nonumber \\
& &-\sum_{j=1}^{k}\tilde{\nu}_j s_{2,j}-\nu_2 s_3-\gamma c_{2,z}, \label{eq:kit27}
\end{eqnarray}
where $P^{(k+1)}$ and $Q^{(k+1)}$ are as in \cite{Stojnicclupcmpl19}'s (90) and $f_{sph}^{(k+1)}$ is as in \cite{Stojnicclupcmpl19}'s (101). We also note from \cite{Stojnicclupcmpl19}'s (112)-(114) that the key output parameters after the $k$-th iteration are
\begin{equation}
\x_i^{(j,s)},\z_i^{(j)},\lambda^{(j-1)}, 1\leq j\leq k,\label{eq:sumkit1}
\end{equation}
and
\begin{equation}\label{eq:sumkit2}
  \phi^{(k)}=\{\bl{p_{err}^{(k)},\hat{s}^{(k)},\hat{d}_2^{(k)},\hat{d}_1^{(k)}},\prp{\hat{\nu}^{(k)},\hat{\nu}_2^{(k)},\hat{\gamma}^{(k)},\hat{P}^{(k)},\hat{Q}^{(k)},\hat{c}_{2,z}^{(k)},\hat{s}_{2}^{(k)},\hat{s}_{3}^{(k)}}\},
\end{equation}
where $\hat{\nu}^{(k)}$ and $\hat{s}_{2}^{(k)}$ are the $(k-1)$-dimensional optimal $\tilde{\nu}$ and $\s_2$ vectors at the $k$-th iteration (i.e. for $k>1$, $\hat{\nu}^{(k)}=[\tilde{\nu}_1,\tilde{\nu}_2,\dots,\tilde{\nu}_{k-1}]$ for optimal $\tilde{\nu}_j$ and $\hat{s}_{2}^{(k)}=[s_{2,1},s_{2,2},\dots,s_{2,k-1}]$ for optimal ${s}_{2,j}$). We also recall from \cite{Stojnicclupcmpl19} that four probably most important output parameters after the $k$-th iteration are
\begin{eqnarray}\label{eq:sumkit3}
\bl{p_{err}^{(k)}} & - &  \mbox{probability of error after the $k$-th iteration}\nonumber \\
\bl{\hat{s}^{(k)}} & = &  \mE((\x^{(k-1)})^T\x^{(k,s)}) - \mbox{objective value after the $k$-th iteration} \nonumber \\
\bl{\hat{d}_2^{(k)}} & = & \mE\|\x^{(k,s)}\|_2^2 - \mbox{squared norm after the $k$-th iteration} \nonumber \\
\bl{\hat{d}_1^{(k)}} & = & \mE\x_{sol}^T\x^{(k,s)} - \mbox{inner product with $\x_{sol}$ after the $k$-th iteration}.
\end{eqnarray}
The above is what one essentially has avaialable before approaching handling the $(k+1)$-th iteration.

\vspace{.1in}
\noindent \xmyboxc{\bl{\emph{\textbf{Key part -- \dgr{Handling the $(k+1)$-th iteration}} }}}

Analogously to \cite{Stojnicclupcmpl19}'s (117) we have
\begin{eqnarray}
\phi_b^{(k+1)}=\mbox{arg} \min_{s,d_1^{(k+1)},d_2^{(k+1)},s_{2,j}} & & s\nonumber \\
\mbox{subject to} & & \max_{P^{(k+1)}}\min_{0\leq c_{2,z}\leq 4}\max_{\gamma,\nu,\nu_2}\xi_{RD}^{(k+1)}(\alpha,\sigma;P^{(k+1)},Q^{(k+1)},c_{2,z},s_{2,j},s_3,\gamma,\tilde{\nu}_j,\nu_2)=r\nonumber \\
& & s_{2,k}=\hat{d}_1^{(k)}+s\sqrt{\hat{d}_2^{(k)}}\nonumber \\
& & s_3=1-d_1^{(k+1)}\nonumber \\
& & c_{2,z}=d_2^{(k+1)}-2d_1^{(k+1)}+1\nonumber \\
& & Q_{k+1,j}^{(k+1)} =\frac{s_3-s_{2,j}+\sigma^2}{\sqrt{c_{2,z}+\sigma^2}\sqrt{\hat{c}_{2,z}^{(j)}+\sigma^2}},\label{eq:sumkit6}
\end{eqnarray}
with
\begin{equation}\label{eq:sumkit7}
\phi_b^{(k+1)}=\{\hat{P}^{(k+1)},\hat{Q}^{(k+1)},\hat{\nu}^{(k+1)},\hat{\nu}_2^{(k+1)},\hat{\gamma}^{(k+1)},\hat{s}^{(k+1)},\hat{d}_2^{(k+1)},\hat{d}_1^{(k+1)},\hat{s}_2^{(k+1)}\}.
\end{equation}
Moreover, analogously to \cite{Stojnicclupcmpl19}'s (118)-(119) we also have
\begin{equation}\label{eq:sumkit8}
  p_{err}^{(k+1)}=1-\mE((\mbox{sign}(\x^{(k+1,s)})+1)/2).
\end{equation}
Finally, in addition to the solution, $\x^{(k+1,s)}$, we also have the following as the full set of all \bl{critical} plus \prp{auxiliary} parameters that appear at the output of the $(k+1)$-th iteration:
\begin{equation}\label{eq:sumkit10}
  \phi^{(k+1)}=\{\bl{p_{err}^{(k+1)},\hat{s}^{(k+1)},\hat{d}_2^{(k+1)},\hat{d}_1^{(k+1)}},\prp{\hat{\nu}^{(k+1)},\hat{\nu}_2^{(k+1)},\hat{\gamma}^{(k+1)},\hat{P}^{(k+1)},\hat{Q}^{(k+1)},\hat{c}_{2,z}^{(k+1)},\hat{s}_{2}^{(k+1)},\hat{s}_{3}^{(k+1)}}\},
\end{equation}
where analogously to (\ref{eq:sumkit3})
\begin{eqnarray}\label{eq:sumki11}
\bl{p_{err}^{(k+1)}} & - &  \mbox{probability of error after the $(k+1)$-th iteration}\nonumber \\
\bl{\hat{s}^{(k+1)}} & = &  \mE((\x^{(k)})^T\x^{(k+1,s)}) - \mbox{objective value after the $(k+1)$-th iteration} \nonumber \\
\bl{\hat{d}_2^{(k+1)}} & = & \mE\|\x^{(k+1,s)}\|_2^2 - \mbox{squared norm after the $(k+1)$-th iteration} \nonumber \\
\bl{\hat{d}_1^{(k+1)}} & = & \mE\x_{sol}^T\x^{(k+1,s)} - \mbox{inner product with $\x_{sol}$ after the $(k+1)$-th iteration}.
\end{eqnarray}

\vspace{.1in}
\noindent \xmyboxc{\bl{\emph{\textbf{Numerical results -- third iteration ($k=2$)} }}}

As we have mentioned in \cite{Stojnicclupcmpl19}, the above discussion is in principle enough to compute all the critical parameters and basically fully characterize the CLuP-plt's performance. We will in a separate paper present a systematic way to compute all these parameters. In \cite{Stojnicclupcmpl19} we gave a quick estimate whose an analogue we could obtain here as well. However, as it will turn out later on, in the case of interest that we chose to highlight in this paper ($\alpha=0.8$, $1/\sigma^2=13$[db], $r_{sc}=1.3$), that is not necessary. Instead, we focus on what happens when $k=2$, i.e. in the third iteration.

First, we recall from (\ref{eq:numressec1}) that for the set of key output parameters after the first iteration we obtained the following
\begin{eqnarray}\label{eq:numresthird1}
  \phi^{(1)}  =  \{\bl{p_{err}^{(1)},\hat{s}^{(1)},\hat{d}_2^{(1)},\hat{d}_1^{(1)}},\prp{\hat{\nu}^{(1)},\hat{\gamma}^{(1)},\hat{c}_{1,z}^{(1)}}\}=  \{\mathbf{0.0072,-0,0.7574,0.8369,0,1.2233,0.835}\}.
\end{eqnarray}
This set of parameters is then utilized to obtain in Tables \ref{tab:tabsecavoidnum1} and \ref{tab:tabsecavoidnum2} a similar set of parameters after the second iteration (basically the one from (\ref{eq:sumavoidclup21aa}))
\begin{eqnarray}\label{eq:numresthird2}
  \phi^{(2)} & = & \hspace{-.02in}\{\bl{p_{err}^{(2)},\hat{s}^{(2)},\hat{d}_2^{(2)},\hat{d}_1^{(2)}},\prp{\hat{\nu}^{(2)},\hat{\nu}_2^{(2)},\hat{\gamma}^{(2)},\hat{p}^{(1)},\hat{q}^{(1)},\hat{c}_{2,z}^{(1)},\hat{s}_{2}^{(2)},\hat{s}_{3}^{(2)}}\}\nonumber \\
  & = &\hspace{-.02in} \{\mathbf{0.00092,-0.9494,0.937,0.96,2.7498,-0.7821,2.1388,0.769,0.8389,0.017,0.01065,0.04}\}.\nonumber \\
\end{eqnarray}
The $\phi^{(2)}$ set can then be utilized according to the above mechanism to obtain $\phi^{(3)}$ set from (\ref{eq:sumkit10}). In Tables \ref{tab:tabnumsecthird1} and \ref{tab:tabnumsecthird2} we show some of these results.
\begin{table}[h]
\caption{\textbf{Theoretical} values for various parameters at the output of the third iteration}\vspace{.1in}
\hspace{-0in}\centering
\small{
\begin{tabular}{||c||c|c|c||c||c|c|c|c||}\hline\hline
$1/\sigma^2 $[db]  & $\hat{\nu}^{(3)}$ & $\hat{\nu}_2^{(3)}$ & $\hat{\gamma}^{(3)}$ & $-\hat{s}^{(3)}$ &   $\xi_{RD}^{(3)} $ & $p_{err}^{(3)} $  & $\|\x^{(3,s)}\|_2^2$ &
$(\x_{sol})^T\x^{(3,s)}$ \\ \hline\hline
$13  $ & $[\mathbf{-0.0930,1.4480} ]$ & $\mathbf{-0.4}  $ & $\mathbf{2.0198}  $   & $\mathbf{0.9705}  $ & $\mathbf{0.1594}  $ & $\mathbf{0.00022}  $ & $\mathbf{0.9440}  $ & $\mathbf{0.9660}  $ \\ \hline\hline
\end{tabular}}
\label{tab:tabnumsecthird1}
\end{table}
In particular, we show the results for $\{\bl{p_{err}^{(3)},\hat{s}^{(3)},\hat{d}_2^{(3)},\hat{d}_1^{(3)}},\prp{\hat{\nu}^{(3)},\hat{\nu}_2^{(3)},\hat{\gamma}^{(3)}}\}$ in Table \ref{tab:tabnumsecthird1}. In Table \ref{tab:tabnumsecthird2}, we show the results for $\{\prp{\hat{c}_{2,z}^{(3)},\hat{s}_{2}^{(3)},\hat{s}_{3}^{(3)}}\}$. The second component of $\hat{s}_2^{(3)}$ based on (\ref{eq:sumkit6}) is $\hat{d}_1^{(2)}+\hat{s}^{(3)}\sqrt{\hat{d}_2^{(2)}}$.
\begin{table}[h]
\caption{\textbf{Theoretical} values for $\prp{\{\hat{c}_{2,z}^{(3)},\hat{s}_{2}^{(3)},\hat{s}_{3}^{(3)}}\}$  }\vspace{.1in}
\hspace{-0in}\centering
\small{
\begin{tabular}{||c||c|c|c||}\hline\hline
$1/\sigma^2 $[db]  & $\hat{s}_2^{(3)}$ & $\hat{s}_3^{(3)}=1-\hat{d}_1^{(3)}$ & $\hat{c}_{2,z}^{(3)}=\hat{d}_2^{(3)}-2\hat{d}_1^{(3)}+1$ \\ \hline\hline
$13  $ & $[\mathbf{0.0120,0.0206}]  $ & $\mathbf{0.0340} $&
$\mathbf{0.0120}$     \\ \hline\hline
\end{tabular}}
\label{tab:tabnumsecthird2}
\end{table}
The results for optimized matrices $P$ and $Q$ aer shown separately
\begin{equation}
  \hat{P}^{(3)}=\begin{bmatrix}
             1 & 0.769 & 0.771 \\
           0.769 &  1 & 0.984 \\
           0.771 & 0.984 &  1
                     \end{bmatrix}
\end{equation}
\begin{equation}
  \hat{Q}^{(3)}=\begin{bmatrix}
             1 & 0.839  & 0.792  \\
           0.839 &  1   & 0.984 \\
           0.792 & 0.984 &   1
           \end{bmatrix}.
\end{equation}
We also recall from (\ref{eq:sumkit6})
\begin{eqnarray}
\hat{Q}_{3,1}^{(3)} & = & \frac{\hat{s}_3^{(3)}-\hat{s}_{2,1}^{(3)}+\sigma^2}{\sqrt{\hat{c}_{2,z}^{(3)}+\sigma^2}\sqrt{\hat{c}_{2,z}^{(1)}+\sigma^2}}\nonumber \\
\hat{Q}_{3,2}^{(3)} & = & \frac{\hat{s}_3^{(3)}-\hat{s}_{2,2}^{(3)}+\sigma^2}{\sqrt{\hat{c}_{2,z}^{(3)}+\sigma^2}\sqrt{\hat{c}_{2,z}^{(2)}+\sigma^2}},\label{eq:numresthird3}
\end{eqnarray}
where $\hat{s}_{2,2}^{(3)}$ is the second component of vector $\hat{s}_{2}^{(3)}$ and for cosmetic reasons $\hat{c}_{2,z}^{(1)}=\hat{c}_{`,z}^{(1)}$. From (\ref{eq:sumsecavoidclup17a5}) and \cite{Stojnicclupcmpl19}'s (90) we also recall that
\begin{eqnarray}
\hat{Q}_{2,1}^{(3)} & = & q^{(1)} =\frac{\hat{s}_3^{(2)}-\hat{s}_{2,1}^{(2)}+\sigma^2}{\sqrt{\hat{c}_{2,z}^{(2)}+\sigma^2}\sqrt{\hat{c}_{2,z}^{(1)}+\sigma^2}},\label{eq:numresthird3}
\end{eqnarray}
where $\hat{s}_{2,1}^{(2)}$ is the first (and only) component of vector $\hat{s}_{2}^{(2)}$.

In Table \ref{tab:tabnumsecthird3}, we show the estimated values for $p_{err}^{(k)}$, $\hat{s}^{(k)}$, $\hat{d}_2^{(k)}$, and $\hat{d}_1^{(k)}$ and how they progress through the iterations
\begin{table}[h]
\caption{Change in $p_{err}^{(k)}$, $\hat{s}^{(k)}$, $\|\x^{(k,s)}\|_2^2$, and $(\x_{sol})^T\x^{(k,s)}$ as $k$ grows; $\alpha=0.8$; $r_{sc}=1.3$; CLuP-plt}
\vspace{.1in}
\hspace{-0in}\centering
\footnotesize{
\begin{tabular}{||c||c|c|c|c||}\hline\hline
$i$  & $p_{err}^{(i)}$ & $-\hat{s}^{(k)}$ & $\hat{d}_2^{(k)}=\|\x^{(i)}\|_2^2$ & $\hat{d}_1^{(k)}=(\x_{sol})^T\x^{(i)}$ \\ \hline\hline
$1$  & $\mathbf{0.00720  }$ & $\mathbf{0.0000  }$ & $\mathbf{0.7574  }$ & $\mathbf{0.8369  }$ \\ \hline
$2$  & $\mathbf{0.00092  }$ & $\mathbf{0.9494  }$ & $\mathbf{0.9370  }$ & $\mathbf{0.9600  }$ \\ \hline
$3$  & $\mathbf{0.00022  }$ & $\mathbf{0.9705  }$ & $\mathbf{0.9440  }$ & $\mathbf{0.9660  }$ \\ \hline\hline
limit & $\mathbf{0.00016}  $ & $\mathbf{0.9721}  $ & $\mathbf{0.9451}  $ & $\mathbf{0.9668}  $  \\ \hline\hline
\end{tabular}}
\label{tab:tabnumsecthird3}
\end{table}
We also add that the discussion from \cite{Stojnicclupcmpl19} regarding $f_{sph}^{(3)}$ applies here as well with the values even closer to one. Namely, from the above considerations and what we presented in \cite{Stojnicclupcmpl19}, we now have $f_{sph}^{(3)}\geq 0.9994$. One can now see what we hinted at a long while ago. Already after the \bl{\textbf{third iteration}} CLuP-plt approaches the limiting CLuP performance (such a performance does not have an \emph{a priori} restriction on the number of iterations). In fact, to be completely fair towards CLuP, the overall CluP-plt variant discussed above needs three iterations but the CLuP mechanism itself needs only \bl{\textbf{two iterations}} (the first iteration technically speaking is not really a part of the CLuP's inherent structure; instead it belongs to the polytope-relaxation heuristic).

%%%%%%%%%%%%%%%%%%%%%%%%%%%%%%%%%%%%%%%%%%%%%%%%%%%%%%%%%%%%%%%%%
\section{Simulations}
\label{sec:cluppltsim}
%%%%%%%%%%%%%%%%%%%%%%%%%%%%%%%%%%%%%%%%%%%%%%%%%%%%%%%%%%%%%%%%%

We in this section complement the above theoretical findings with a set of results obtained through numerical simulations. As mentioned above, we considered a standard scenario, $\alpha=0.8$, $1/\sigma^2=13$[db], $r_{sc}=1.3$ and run the simulations with $n=800$. The results are shown in Table \ref{tab:tabsim1}.
\begin{table}[h]
\caption{CLuP-\red{plt} -- change in $p_{err}^{(k)}$, $\hat{s}^{(k)}$, $\|\x^{(k,s)}\|_2^2$, and $(\x_{sol})^T\x^{(k,s)}$ as $k$ grows; $\alpha=0.8$; $r_{sc}=1.3$; \bl{\textbf{Simulated}} ($n=800$)/\textbf{Theory--computed} ($n\rightarrow\infty$)}\vspace{.1in}
\hspace{-0in}\centering
\footnotesize{
\begin{tabular}{||c||c|c|c|c||}\hline\hline
$k$  & $p_{err}^{(k)}$ & $-\hat{s}^{(k)}$ & $\hat{d}_2^{(k)}=\|\x^{(k,s)}\|_2^2$ & $\hat{d}_1^{(k)}=(\x_{sol})^T\x^{(k,s)}$ \\ \hline\hline
$1$  & \red{$\mathbf{0.00776  }$}/$\mathbf{0.00720  }$ &  \red{$\mathbf{0.0000  }$}/$\mathbf{0.0000  }$ &  \red{$\mathbf{0.7572  }$}/$\mathbf{0.7574  }$ & \red{$\mathbf{0.8365  }$}/$\mathbf{0.8369  }$ \\ \hline
$2$  & \bl{$\mathbf{0.00078  }$}/$\mathbf{0.00092  }$ & \bl{$\mathbf{0.9495  }$}/$\mathbf{0.9494  }$ & \bl{$\mathbf{0.9390  }$}/$\mathbf{0.9370  }$ & \bl{$\mathbf{0.9616  }$}/$\mathbf{0.9600  }$ \\ \hline
$3$  & \bl{$\mathbf{0.00270  }$}/$\mathbf{0.00022  }$ & \bl{$\mathbf{0.9710  }$}/$\mathbf{0.9705  }$ & \bl{$\mathbf{0.9444  }$}/$\mathbf{0.9440  }$ & \bl{$\mathbf{0.9661  }$}/$\mathbf{0.9660  }$ \\ \hline\hline
limit & $\mathbf{0.00016}  $ & $\mathbf{0.9721}  $ & $\mathbf{0.9451}  $ & $\mathbf{0.9668}  $  \\ \hline\hline
\end{tabular}}
\label{tab:tabsim1}
\end{table}
We observe a very good agreement between the theoretical predictions and the results obtained through numerical simulations. We also obtained the following simulated results for matrices $P$ and $Q$
\begin{equation}
  P^{(3)}=\begin{bmatrix}
    \bl{\mathbf{1}} &   \bl{\mathbf{0.7653}}  &  \bl{\mathbf{0.7660}}  \\
    \bl{\mathbf{0.7653}} &   \bl{\mathbf{1}}  &  \bl{\mathbf{0.9886}}  \\
    \bl{\mathbf{0.7660}} &   \bl{\mathbf{0.9886}}  &  \bl{\mathbf{1}}
    \end{bmatrix} \qquad \hat{P}^{(3)}=\begin{bmatrix}
            \mathbf{1} & \mathbf{0.769} & \mathbf{0.771} \\
           \mathbf{0.769} &  \mathbf{1} & \mathbf{0.984} \\
           \mathbf{0.771} & \mathbf{0.984} &  \mathbf{1}
                     \end{bmatrix}
\end{equation}
and
\begin{equation}
  Q^{(3)}=\begin{bmatrix}
    \bl{\mathbf{1}} &   \bl{\mathbf{0.8300}}  &  \bl{\mathbf{0.7915}}  \\
    \bl{\mathbf{0.8300}} &   \bl{\mathbf{1}}  &  \bl{\mathbf{0.9893}} \\
    \bl{\mathbf{0.7915}} &   \bl{\mathbf{0.9893}}  &  \bl{\mathbf{1}}
                                \end{bmatrix} \qquad \hat{Q}^{(3)}=\begin{bmatrix}
            \mathbf{ 1} & \mathbf{0.839}  & \mathbf{0.792}  \\
           \mathbf{0.839} &  \mathbf{1}   & \mathbf{0.984} \\
           \mathbf{0.792} & \mathbf{0.984} &   \mathbf{1}
           \end{bmatrix}.
\end{equation}
We again observe a very solid agreement between the theoretical and simulated results.

For a comparison, we in Table \ref{tab:tabsim2} show how the performance of the CLuP algorithm itself progresses through the iterations for the same parameters as above.
\begin{table}[h]
\caption{CLuP -- change in $p_{err}^{(k)}$, $\hat{s}^{(k)}$, $\|\x^{(k,s)}\|_2^2$, and $(\x_{sol})^T\x^{(k,s)}$ as $k$ grows; $\alpha=0.8$; $r_{sc}=1.3$; $n=800$; \bl{\textbf{Simulated}}}\vspace{.1in}
\hspace{-0in}\centering
\footnotesize{
\begin{tabular}{||c||c|c|c|c||}\hline\hline
$k$  & $p_{err}^{(k)}$ & $\hat{s}^{(k)}$ & $\hat{d}_2^{(k)}=\|\x^{(k,s)}\|_2^2$ & $\hat{d}_1^{(k)}=(\x_{sol})^T\x^{(k,s)}$ \\ \hline\hline
$1$  & $\bl{\mathbf{0.04470  }}$ & $\bl{\mathbf{0.1294  }}$ & $\bl{\mathbf{0.7046  }}$ & $\bl{\mathbf{0.7658  }}$ \\ \hline
$2$  & $\bl{\mathbf{0.00790  }}$ & $\bl{\mathbf{0.9126  }}$ & $\bl{\mathbf{0.9048  }}$ & $\bl{\mathbf{0.9317  }}$ \\ \hline
$3$  & $\bl{\mathbf{0.00121  }}$ & $\bl{\mathbf{0.9618  }}$ & $\bl{\mathbf{0.9374  }}$ & $\bl{\mathbf{0.9603  }}$ \\ \hline
$4$  & $\bl{\mathbf{0.00033  }}$ & $\bl{\mathbf{0.9705  }}$ & $\bl{\mathbf{0.9438  }}$ & $\bl{\mathbf{0.9658  }}$ \\ \hline
$5$  & $\bl{\mathbf{0.00020  }}$ & $\bl{\mathbf{0.9719  }}$ & $\bl{\mathbf{0.9449  }}$ & $\bl{\mathbf{0.9668  }}$ \\ \hline\hline
%$6$  & $\bl{\mathbf{0.00017  }}$ & $\bl{\mathbf{0.9721  }}$ & $\bl{\mathbf{0.9451  }}$ & $\bl{\mathbf{0.9668  }}$ \\ \hline\hline
limit & $\mathbf{0.00016}  $ & $\mathbf{0.9721}  $ & $\mathbf{0.9451}  $ & $\mathbf{0.9668}  $  \\ \hline\hline
\end{tabular}}
\label{tab:tabsim2}
\end{table}
We observe that it takes one iteration less for CLuP-plt to get to a better performance level than the original CLuP. One should here also keep in mind what we mentioned earlier. Namely, the first iteration in CLuP-plt is not really a part of the CLuP structure itself, whereas the first iteration in \cite{Stojnicclupcmpl19} is. That basically means that starting CLuP with a better initial $\x^{(0)}$ in this case saves two out of four iterations.

%%%%%%%%%%%%%%%%%%%%%%%%%%%%%%%%%%%%%%%%%%%%%%%%%%%%%%%%%%%%%%%%
\subsection{Changing SNR}
\label{sec:cluppltsimchangesnr}
%%%%%%%%%%%%%%%%%%%%%%%%%%%%%%%%%%%%%%%%%%%%%%%%%%%%%%%%%%%%%%%%%

We also simulated two different SNR scenarios as well. We first decreased the SNR to $1/\sigma^2=12$[db] and then to $1/\sigma^2=11$[db] while havng all other parameters take the same values as in the above $1/\sigma^2=13$[db] case. The results are shown in Tables \ref{tab:tabsim3} and \ref{tab:tabsim4}.
\begin{table}[h]
\caption{CLuP-\red{plt} -- change in $p_{err}^{(k)}$, $\hat{s}^{(k)}$, $\|\x^{(k,s)}\|_2^2$, and $(\x_{sol})^T\x^{(k,s)}$ as $k$ grows; $1/\sigma^2=12$[db]; $\alpha=0.8$; $r_{sc}=1.3$; $n=800$; \bl{\textbf{Simulated}}}\vspace{.1in}
\hspace{-0in}\centering
\footnotesize{
\begin{tabular}{||c||c|c|c|c||}\hline\hline
$k$  & $p_{err}^{(k)}$ & $\hat{s}^{(k)}$ & $\hat{d}_2^{(k)}=\|\x^{(k,s)}\|_2^2$ & $\hat{d}_1^{(k)}=(\x_{sol})^T\x^{(k,s)}$ \\ \hline\hline
$1$  & $\red{\mathbf{0.01640  }}$ & $\red{\mathbf{0.0000  }}$ & $\red{\mathbf{0.7360  }}$ & $\red{\mathbf{0.8136  }}$ \\ \hline
$2$  & $\bl{\mathbf{0.00371  }}$ & $\bl{\mathbf{0.9389  }}$ & $\bl{\mathbf{0.9263  }}$ & $\bl{\mathbf{0.9497  }}$ \\ \hline
$3$  & $\bl{\mathbf{0.00146  }}$ & $\bl{\mathbf{0.9654  }}$ & $\bl{\mathbf{0.9347  }}$ & $\bl{\mathbf{0.9578  }}$ \\ \hline
$4$  & $\bl{\mathbf{0.00094  }}$ & $\bl{\mathbf{0.9673  }}$ & $\bl{\mathbf{0.9362  }}$ & $\bl{\mathbf{0.9594  }}$ \\ \hline
$5$  & $\bl{\mathbf{0.00085  }}$ & $\bl{\mathbf{0.9677  }}$ & $\bl{\mathbf{0.9366  }}$ & $\bl{\mathbf{0.9597  }}$ \\ \hline\hline
%$6$  & $\bl{\mathbf{0.00017  }}$ & $\bl{\mathbf{0.9721  }}$ & $\bl{\mathbf{0.9451  }}$ & $\bl{\mathbf{0.9668  }}$ \\ \hline\hline
limit & $\mathbf{0.00072}  $ & $\mathbf{0.9695}  $ & $\mathbf{0.9400}  $ & $\mathbf{0.9622}  $  \\ \hline\hline
\end{tabular}}
\label{tab:tabsim3}
\end{table}
In particular, in Table \ref{tab:tabsim3} we show the results for $1/\sigma^2=12$[db]. Again after a fairly small number of iterations one approaches the limiting performance. Moreover, although one is now a bit closer to the line of corrections (and supposedly a tougher to handle SNR regime), the increase in the number of iterations is rather minimal. Instead of three iterations that were needed for $1/\sigma^2=13$[db] here five iterations suffice. The same discussion regarding counting the starting iterations that we emphasized above applies again. That means that in terms of CLuP's own iterations, instead of two for $1/\sigma^2=13$[db] one now needs four for $1/\sigma^2=12$[db].

In Table \ref{tab:tabsim3} we show the results for $1/\sigma^2=11$[db]. One is now really close to the line of corrections and a significant increase in the number of iterations might be expected. However, as results in the table show within 8 iterations one is reaching performance level literally identical to the optimal one. However, already after the fifth iteration one is very close to the optimum with the margin of error being on the fourth decimal (of course this iteration numbering accounts for the first iteration; if one is more fair towards CLuP then these 8 and 5 iterations should be replaced by 7 and 4 of CLuP's own iterations).
\begin{table}[h]
\caption{CLuP-\red{plt} -- change in $p_{err}^{(k)}$, $\hat{s}^{(k)}$, $\|\x^{(k,s)}\|_2^2$, and $(\x_{sol})^T\x^{(k,s)}$ as $k$ grows; $1/\sigma^2=11$[db]; $\alpha=0.8$; $r_{sc}=1.3$; $n=800$; \bl{\textbf{Simulated}}}\vspace{.1in}
\hspace{-0in}\centering
\footnotesize{
\begin{tabular}{||c||c|c|c|c||}\hline\hline
$k$  & $p_{err}^{(k)}$ & $\hat{s}^{(k)}$ & $\hat{d}_2^{(k)}=\|\x^{(k,s)}\|_2^2$ & $\hat{d}_1^{(k)}=(\x_{sol})^T\x^{(k,s)}$ \\ \hline\hline
$1$  & $\red{\mathbf{0.02720  }}$ & $\red{\mathbf{0.0000  }}$ & $\red{\mathbf{0.7186  }}$ & $\red{\mathbf{0.7912  }}$ \\ \hline
$2$  & $\bl{\mathbf{0.01117  }}$ & $\bl{\mathbf{0.9304  }}$ & $\bl{\mathbf{0.9172  }}$ & $\bl{\mathbf{0.9359  }}$ \\ \hline
$3$  & $\bl{\mathbf{0.00585  }}$ & $\bl{\mathbf{0.9619  }}$ & $\bl{\mathbf{0.9298  }}$ & $\bl{\mathbf{0.9498  }}$ \\ \hline
$4$  & $\bl{\mathbf{0.00384  }}$ & $\bl{\mathbf{0.9654  }}$ & $\bl{\mathbf{0.9332  }}$ & $\bl{\mathbf{0.9540  }}$ \\ \hline
$5$  & $\bl{\mathbf{0.00304  }}$ & $\bl{\mathbf{0.9663  }}$ & $\bl{\mathbf{0.9342  }}$ & $\bl{\mathbf{0.9554  }}$ \\ \hline
$6$  & $\bl{\mathbf{0.00281  }}$ & $\bl{\mathbf{0.9666  }}$ & $\bl{\mathbf{0.9346  }}$ & $\bl{\mathbf{0.9560  }}$ \\ \hline
$7$  & $\bl{\mathbf{0.00263  }}$ & $\bl{\mathbf{0.9668  }}$ & $\bl{\mathbf{0.9347  }}$ & $\bl{\mathbf{0.9562  }}$ \\ \hline
$8$  & $\bl{\mathbf{0.00255  }}$ & $\bl{\mathbf{0.9668  }}$ & $\bl{\mathbf{0.9348  }}$ & $\bl{\mathbf{0.9563  }}$ \\ \hline\hline
limit & $\mathbf{0.00249}  $ & $\mathbf{0.9670}  $ & $\mathbf{0.9350}  $ & $\mathbf{0.9565}  $  \\ \hline\hline
\end{tabular}}
\label{tab:tabsim4}
\end{table}

%%%%%%%%%%%%%%%%%%%%%%%%%%%%%%%%%%%%%%%%%%%%%%%%%%%%%%%%%%%%%%%%%%%%%%%%%%%%%%%%
\section{Conclusion}
\label{sec:conc}
%%%%%%%%%%%%%%%%%%%%%%%%%%%%%%%%%%%%%%%%%%%%%%%%%%%%%%%%%%%%%%%%%%%%%%%%%%%%%%%%

In \cite{Stojnicclupint19} we introduced the so-called CLuP (Controlled Loosening-up) method as a way to solve the MIMO ML problem \textbf{exactly}. As observed already in \cite{Stojnicclupint19}, one of the CLuP's very best features is its very low running complexity. Basically, as an iterative algorithm, its complexity is mainly driven by the number of iterations that it needs to achieves required convergence precision. Along those lines, what was essentially observed in \cite{Stojnicclupint19} was the fact that the typical number of iterations needed to achieve an excellent performance is not only polynomial but actually rather a fixed number that does not depend on the problem dimension. One would typically assume that if a polynomial number of iterations suffices then that would have already been an unprecedent feature in algorithms that attack MIMO ML. The discovery that a fixed number of iterations works as well was beyond remarkable.

We then in the followup paper \cite{Stojnicclupcmpl19} looked at this property more carefully and designed a Random Duality Theory type of mechanism to precisely quantify not only the required number of iterations but rather the behavior of all important systems parameters as they move/change through the CLuP's iterations. Such an approach is of course the most complete type of performance analysis that one can hope for. As expected, the analysis confirmed all the observations from \cite{Stojnicclupint19}. As we wanted to maintain the introductory papers on this topic to be related to the simplest possible underlying structures, we in \cite{Stojnicclupint19,Stojnicclupcmpl19} considered only the most basic CLuP version. However, on numerous occasions we did emphasize that various more advanced structures can now easily be built and analyzed. In this paper, we provide a first step in those directions. Namely, the standard basic CLuP from \cite{Stojnicclupint19,Stojnicclupcmpl19} is here modified to its a variant where for the starting step of the algorithm instead of a random initialization one utilizes the well-known so-called polytope-relaxation heuristic. We provided again a detailed \textbf{\emph{per iteration level analysis}} similar to the one that we provided in \cite{Stojnicclupcmpl19} for the standard CLuP. It turns out that the new version of CLuP, to which we refer as CLuP-plt, is indeed faster and requires a smaller number of iterations.

We should here also emphasize as in \cite{Stojnicclupint19,Stojnicclupcmpl19}, that we again didn't utilize the most advanced concepts but rather a very simple upgrade. As earlier, we wanted to showcase the conceptual opportunity for upgrade rather than its a best realization (we will address those in separate papers that will deal a bit more with further engineering of the main concepts rather than with their fundamental structuring). Nonetheless, the fact that in the regimes of interest CLuP-plt needed between 3-5 iterations in total (2-4 if one excludes the initialization) continues to sound almost unbelievable. After the theoretical considerations, we proceeded further and presented a solid set of numerical simulations results. They turned out to be in an excellent agreement with the theoretical predictions.

As mentioned above, a large class of way more sophisticated CLuP versions we will discuss in separate papers. Moreover, we will also in parallel present how they behave when applied for solving problems from different scientific fields as well.

%%\newpage1
%%\setcounter{page}{1}
\begin{singlespace}
\bibliographystyle{plain}
\bibliography{cluppltRefs}

\begin{thebibliography}{10}

\bibitem{BunTsyWeg07}
F.~Bunea, A.~B. Tsybakov, and M.~H. Wegkamp.
\newblock Sparsity oracle inequalities for the lasso.
\newblock {\em Electronic Journal of Statistics}, 1:169--194, 2007.

\bibitem{CheDon95}
S.S. Chen and D.~Donoho.
\newblock Examples of basis pursuit.
\newblock {\em Proceeding of wavelet applications in signal and image
  processing III}, 1995.

\bibitem{DonMalMon10}
D.~Donoho, A.~Maleki, and A.~Montanari.
\newblock The noise-sensitiviy thase transition in compressed sensing.
\newblock available online at \bl{\url{http://arxiv.org/abs/1004.1218}}.

\bibitem{FinPhoSD85}
U.~Fincke and M.~Pohst.
\newblock Improved methods for calculating vectors of short length in a
  lattice, including a complexity analysis.
\newblock {\em Mathematics of Computation}, 44:463--471, April 1985.

\bibitem{GoeWill95}
M.~Goemans and D.~Williamnson.
\newblock Improved approximation algorithms for maximum cut and satisfiability
  problems using semidefinite programming.
\newblock {\em Journal of ACM}, 42(6):1115--1145, 1995.

\bibitem{GolVanLoan96Book}
G.~Golub and C.~Van Loan.
\newblock {\em Matrix Computations}.
\newblock John Hopkins University Press, 3rd edition, 1996.

\bibitem{HassVik05}
B.~Hassibi and H.~Vikalo.
\newblock On the sphere decoding algorithm. {P}art {I}: The expected
  complexity.
\newblock {\em IEEE Trans. on Signal Processing}, 53(8):2806--2818, August
  2005.

\bibitem{JalOtt05}
J.~Jalden and B.~Ottersten.
\newblock On the complexity of the sphere decoding in digital communications.
\newblock {\em IEEE Trans. on Signal Processing}, 53(4):1474--1484, August
  2005.

\bibitem{GroLovSch93Book}
L.~Lovasz M.~Grotschel and A.~Schriver.
\newblock {\em Geometric algorithms and combinatorial optimization}.
\newblock New York: Springer-Verlag, 2nd edition, 1993.

\bibitem{MeinYu09}
N.~Meinshausen and B.~Yu.
\newblock Lasso-type recovery of sparse representations for high-dimensional
  data.
\newblock {\em Ann. Statist.}, 37(1):246–270, 2009.

\bibitem{StojnicCSetamBlock09}
M.~Stojnic.
\newblock Block-length dependent thresholds in block-sparse compressed sensing.
\newblock available online at \bl{\url{http://arxiv.org/abs/0907.3679}}.

\bibitem{StojnicDiscPercp13}
M.~Stojnic.
\newblock Discrete perceptrons.
\newblock available online at \bl{\url{http://arxiv.org/abs/1306.4375}}.

\bibitem{StojnicGenLasso10}
M.~Stojnic.
\newblock A framework for perfromance characterization of \emph{LASSO}
  algortihms.
\newblock available online at \bl{\url{http://arxiv.org/abs/1303.7291}}.

\bibitem{StojnicGenSocp10}
M.~Stojnic.
\newblock A performance analysis framework for \emph{SOCP} algorithms in noisy
  compressed sensing.
\newblock available online at \bl{\url{http://arxiv.org/abs/1304.0002}}.

\bibitem{StojnicPrDepSocp10}
M.~Stojnic.
\newblock A problem dependent analysis of \emph{SOCP} algorithms in noisy
  compressed sensing.
\newblock available online at \bl{\url{http://arxiv.org/abs/1304.0480}}.

\bibitem{StojnicRegRndDlt10}
M.~Stojnic.
\newblock Regularly random duality.
\newblock available online at \bl{\url{http://arxiv.org/abs/1303.7295}}.

\bibitem{StojnicUpper10}
M.~Stojnic.
\newblock Upper-bounding $\ell_1$-optimization weak thresholds.
\newblock available online at \bl{\url{http://arxiv.org/abs/1303.7289}}.

\bibitem{StojnicCSetam09}
M.~Stojnic.
\newblock Various thresholds for $\ell_1$-optimization in compressed sensing.
\newblock available online at \bl{\url{http://arxiv.org/abs/0907.3666}}.

\bibitem{StojnicISIT2010binary}
M.~Stojnic.
\newblock Recovery thresholds for $\ell_1$ optimization in binary compressed
  sensing.
\newblock {\em ISIT, IEEE International Symposium on Information Theory}, pages
  1593 -- 1597, 13-18 June 2010.
\newblock Austin, TX.

\bibitem{Stojnicbinary16asym}
M.~Stojnic.
\newblock Box constrained $\ell_1$ optimization in random linear systems --
  asymptotics.
\newblock 2016.
\newblock available online at \bl{\url{http://arxiv.org/abs/1612.06835}}.

\bibitem{Stojnicbinary16fin}
M.~Stojnic.
\newblock Box constrained $\ell_1$ optimization in random linear systems --
  finite dimensions.
\newblock 2016.
\newblock available online at \bl{\url{http://arxiv.org/abs/1612.06839}}.

\bibitem{Stojnicclupcmpl19}
M.~Stojnic.
\newblock Complexity analysis of the controlled loosening-up ({CLuP})
  algorithm.
\newblock 2019.
\newblock available online at arxiv.

\bibitem{Stojnicclupint19}
M.~Stojnic.
\newblock Controlled loosening-up ({CLuP}) -- achieving \emph{exact} {MIMO ML}
  in polynomial time.
\newblock 2019.
\newblock available online at arxiv.

\bibitem{StojnicBBSD05}
M.~Stojnic, Haris Vikalo, and Babak Hassibi.
\newblock A branch and bound approach to speed up the sphere decoder.
\newblock {\em ICASSP, IEEE International Conference on Acoustics, Signal and
  Speech Processing}, 3:429--432, March 2005.

\bibitem{StojnicBBSD08}
M.~Stojnic, Haris Vikalo, and Babak Hassibi.
\newblock Speeding up the sphere decoder with ${H}^{\infty}$ and ${SDP}$
  inspired lower bounds.
\newblock {\em IEEE Transactions on Signal Processing}, 56(2):712--726,
  February 2008.

\bibitem{Tibsh96}
R.~Tibshirani.
\newblock Regression shrinkage and selection with the lasso.
\newblock {\em J. Royal Statistic. Society}, B 58:267--288, 1996.

\bibitem{vandeGeer08}
S.~van~de Geer.
\newblock High-dimensional generalized linear models and the lasso.
\newblock {\em Ann. Statist.}, 36(2):614--645, 2008.

\bibitem{vanMaarWar00}
H.~van Maaren and J.P. Warners.
\newblock Bound and fast approximation algorithms for binary quadratic
  optimization problems with application on {MAX 2SAT}.
\newblock {\em Discrete applied mathematics}, 107:225--239, 2000.

\end{thebibliography}
\end{singlespace}

\end{document}